\newtheorem{theorem}{Theorem}[section]
\newtheorem{corollary}[theorem]{Corollary}
\newtheorem{lemma}[theorem]{Lemma}
\theoremstyle{Definition}
\newtheorem{Definition}[theorem]{Definition}
\newenvironment{fminipage}%
  {\begin{Sbox}\begin{minipage}}%
  {\end{minipage}\end{Sbox}\fbox{\TheSbox}}
\newenvironment{algbox}[0]{\vskip 0.2in
\noindent 
\begin{fminipage}{6.3in}
}{
\end{fminipage}
\vskip 0.2in
}
\def\norm#1{\left\| #1 \right\|}
\newcommand\bb{\boldsymbol{\mathit{b}}}
\newcommand\ff{\boldsymbol{\mathit{f}}}
\newcommand\uu{\boldsymbol{\mathit{u}}}
\newcommand\RR{\boldsymbol{\mathit{R}}}
\newcommand{\polylog}{\text{polylog}}
\newcommand{\etal}{~\textit{et. al.}}
\begin{document}

\title{
Approximate Undirected Maximum Flows in\\$O(m\polylog(n))$ Time
}

\author{
Richard Peng
\\Georgia Tech\\rpeng@cc.gatech.edu
}

\maketitle

\begin{abstract}
We give the first $O(m\polylog(n))$ time algorithms for approximating
maximum flows in undirected graphs and constructing
$\polylog(n)$-quality cut-approximating hierarchical tree decompositions.
Our algorithm invokes existing algorithms for these two problems
recursively while gradually incorporating size reductions.
These size reductions are in turn obtained via ultra-sparsifiers,
which are key tools in solvers for symmetric diagonally
dominant (SDD) linear systems.
\end{abstract}

\section{Introduction}

The problem of finding maximum flows and minimum cuts has been studied
extensively in algorithmic graph theory and combinatorial optimization.
It led to important tools in algorithm design such as
augmenting paths~\cite{FordF56},
blocking flows~\cite{Dinic70,EdmondsK72},
dynamic trees~\cite{GalilN80,SleatorT83},
dual algorithms~\cite{GoldbergT86},
scaling algorithms~\cite{GoldbergR98},
graph sparsification~\cite{BenczurK96},
and electrical flows~\cite{ChristianoKMST10,Madry13}.
In its simplest form, the maximum flow problem asks to route the most
flow from a source to a sink while obeying edge capacities.
Its dual, the minimum cut problem, asks for the minimum capacity
of edges whose removal disconnects the sink from the source.

Approximating maximum flows in undirected graphs
has received much attention recently due to its tighter interactions
with randomized and numerical tools~\cite{ChristianoKMST10,LeeRS13,Sherman13,KelnerLOS14}.
Algorithms for this variant have applications in graph
partitioning~\cite{KhandekarRV06:journal,OrecchiaSVV08,Sherman09},
image processing~\cite{ChinMMP13}, and as we will describe, the construction
of oblivious routing schemes~\cite{RackeST14}.

Recently, algorithms that approximate undirected maximum flows in
$O(m^{1 + o(1)} \epsilon^{-2})$ time were given by
Sherman~\cite{Sherman13} and  Kelner\etal~\cite{KelnerLOS14}.
At the core of these algorithms are congestion-approximators~\cite{Madry10} and
oblivious routing schemes respectively~\cite{KelnerLOS14}.
Congestion-approximators can be viewed as a small set of representative
cuts in the graph, and oblivious routing schemes are more powerful in that
they preserve flows as well as cuts.
The runtime of these algorithms stems from both the quality of these
approximators as well as the cost of constructing them.
A natural question stemming from them is to further
improve this running time.

Oblivious routing schemes are of independent interest in the study
of graph partitioning and routing.
Schemes with quality $\polylog(n)$ were shown to exist by R\"{a}cke~\cite{Racke02},
and invoking them would lead to a better running time of $m\polylog(n)$
after preprocessing.
However, finding these schemes requires solving an intricate sequence of
ratio cut problems~\cite{HarrelsonHR03,BienkowskiKR03}.

The current best algorithms for approximating ratio cuts are based on invoking
(approximate) maximum flows~\cite{KhandekarRV06:journal,OrecchiaSVV08,Sherman09}.
Following the break-through on approximate maximum flows,
R\"{a}cke \textit{et al.}~\cite{RackeST14} gave a more efficient algorithm
for constructing oblivious routing schemes.
This result can be viewed as producing a  $\polylog(n)$-quality oblivious
routing scheme by computing maximum flows on graphs of total size
$O(m\polylog(n))$.
This leads to a chick-and-egg situation when approximators and maximum flow
algorithms are viewed as black boxes:
either gives the other via an overhead of $\polylog(n)$, but to get the calls
started we need to invoke routines that run in $O(m^{1 + o(1)})$ time and
produce $m^{o(1)}$-approximations.

In this paper, we complete this cycle of algorithmic invocations by resolving this
chicken-and-egg situation, leading to improved algorithms to all intermediate problems.
The key observation is that the oblivious routing schemes produced by
the R\"{a}cke~\textit{et al.} algorithm have fixed size:
producing them via recursive calls does not affect the cost of invoking them,
and any error introduced in the recursion will only show up
as a slightly larger overhead on this fixed size.
The main steps of our algorithm on a graph $G$ are:
\begin{enumerate}
\item Produce a graph $H$ with size $m / \polylog(n)$
that can $\polylog(n)$-approximate $G$.
\item Construct an approximator for $H$ using the R\"{a}cke\etal~
algorithm, making more recursive maximum flow calls.
\item Convert this scheme to one for $G$,
and use it to solve approximate maximum flows.
\end{enumerate}

The size reduction allows us to bound the total size of the maximum flow
instances computed recursively by at most $m / 2$, giving a total size bound
of $O(m)$.
As $H$ $\polylog(n)$-approximates $G$,
the approximator for $H$ returned by the recursive calls is still a
$\polylog(n)$-quality approximator for $G$.
The fact that its size is $O(n)$ then allows us to bound the overall
cost by $O(m\polylog(n))$.

This recursive scheme allows us to bypass the more expensive
approximators used to initiate this sequence of algorithmic calls.
The total cost in turn reduces from $O(m^{1 + o(1)})$ to $O(m\polylog(n))$.
Furthermore, these size reductions can be directly obtained via ultra-sparsifiers
from solvers for linear systems in graph Laplacians~\cite{SpielmanTengSolver:journal}.
This results in a short pseudocode when the pieces are viewed as black-boxes.
Our algorithm is also analogous to iterative schemes for
computing row samples of matrices~\cite{LiMP13}: the congestion-approximator plays
a similar role to the small row sample, and the call structure is analogous
to what we use here, with ultra-sparsifiers being the size reductions.

We will introduce the algorithmic tools that we invoke
in Section~\ref{sec:background},
and describe our algorithm in Section~\ref{sec:algo}.
For simplicity, we will limit our presentation to the cut setting
and utilize the oblivious routing schemes as congestion-approximators.
As the oblivious routing construction by R\"{a}cke~\textit{et al.}~\cite{RackeST14}
also produces embeddings, and size reductions similar to ultra-sparsifiers
were used in the flow based algorithm by Kelner~\textit{et al.},
we believe this scheme can be extended to the flow setting as well.

We will also not optimize for the exponent in $\log{n}$ because
further runtime improvements based on this approach are likely.
However, major obstacles remain in obtaining running times of $m\log^{5}n$
or faster:
\begin{enumerate}
\item Random sampling based ultra-sparsifiers incur an overhead of $\log^2{n}$ in error.
\item Current oblivious routing constructions are based on top-down
divide-and-conquer with $\log{n}$ levels, each making a sequence of $\log{n}$
maximum flow calls through rebalancings~\cite{RackeST14}.
At present these routines also incur several additional log factors due to error
accumulations over levels of recursion.
\item Oblivious routing schemes incur a distortion of at least $\log{n}$~\cite{Racke08}.
\item Producing balanced cuts using maximum flows requires $\log{n}$
maximum flow invocation~\cite{OrecchiaSVV08,Sherman09}.
\item The invocation of congestion-approximators to produce approximate maximum flows
requires an iteration count that's at least quadratic in the distortion,
as well as incurring another $\log{n}$ factor overhead.~\cite{Sherman13,KelnerLOS14}.
\footnote{We cite the lower cost bounds from~\cite{KelnerLOS14} here under the
belief that these routines have inherent connections.}
\end{enumerate}
Directly combining these estimates leads to a total cost of about $m\log^{11}n$.
An optimistic view is that the algorithms using congestion-approximators can depend
linearly on the distortion, and reusing maximum flow calls across the construction
scheme leads to recursion on graphs with total size $m\log{n}$.
Even in this case, the overall cost is still about $m \log^5{n}$.
Therefore, we believe obtaining a running time of $O(m\log^3{n})$
will require significant improvements to both algorithms that construct
oblivious routings and iterative methods that utilize them.

\section{Background}
\label{sec:background}
Our presentation follows the notations from~\cite{Sherman13}
and~\cite{KelnerLOS14}.
A flow $\ff$ meets demands $\bb$ if for all vertices $v$,
the total amount of flow enter/leaving $v$ is $\bb_v$.
For edge capacities $\uu$, the congestion of $\ff$
is the maximum of $|\ff_e / \uu_e|$ over all edges.
By a standard reduction via binary search (e.g. Sections 2.2 and 3.1 of
~\cite{ChristianoKMST10}), we can focus on the decision version.
For a fixed demand, the problem asks to either route it with congestion at
most $1 + \epsilon$, or certify via a cut that it cannot be routed
with congestion less than $1$.

A cut is defined by a subset of vertices $S$:
its demand, $\bb(S)$, is the total demand of vertices in $S$,
and its capacity, $\uu(S)$, is the total capacity of edges leaving $S$.
The ratio between demand and capacity is a lower bound
for the minimum congestion, and the maxflow-mincut theorem
states that the minimum congestion needed to route a demand
is in fact equal to the maximum demand/capacity ratio over all cuts $S$.

The connections between flows, cuts, congestion, and demand brings us
to the notion of $(1 + \epsilon)$-approximate flow/cut solutions, which
will be our standard notion of approximate solutions.
For a demand $\bb$ and an error $\epsilon$, such a pair consists of a flow
and cut whose congestion and demand/capacity value
are within a factor of $1 + \epsilon$ of each other.

We will make extensive use of approximations, and denote them
using the $\approx_{\kappa}$ notation.
For two scalar quantities, $x$ and $y$, we use
$x \approx_{\kappa} y$ to mean that there exist parameters
$\gamma_{\min}$ and $\gamma_{\max} \leq \gamma_{\min} \kappa$
such that $\gamma_{\min} x \leq y \leq \gamma_{\max} x$.


\subsection{Congestion Approximators}

We will use an algorithm by Sherman~\cite{Sherman13} on using
 congestion-approximators to compute approximate maximum flows.
 
 \begin{Definition}[Definition 1.1. in~\cite{Sherman13:arxiv}]
An $\alpha$-congestion-approximator of $G$ is a matrix $\RR$ such that
for any demand vector $\bb$,
\[
\norm{\RR \bb}_{\infty} \approx_{\alpha} \text{opt}(\bb)
\]
where $\text{opt}(\bb)$ is the minimum congestion
required to route the demands $\bb$ in $G$.
\end{Definition}

\begin{theorem} [Theorem 1.2. from~\cite{Sherman13:arxiv}]
\label{thm:maxFlow}
There is a routine \textsc{ApproximatorMaxFlow} that, given demands $\bb$
and access to an $\alpha$-congestion-approximator $\RR$, makes
$O( \alpha^2 \log^{2}{n} \epsilon^{-3})$ iterations and returns an
$(1 + \epsilon)$-approximate flow/cut solution for these demands.
Each iteration takes $O(m)$ time, plus computing matrix-vector products
involving $\RR$ and $\RR^{T}$.
\end{theorem}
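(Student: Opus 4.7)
The plan is to cast approximate maximum flow as a smooth convex optimization problem and solve it by a first-order method that accesses $\RR$ only through matrix-vector products. Let $\BB$ denote the edge-vertex incidence matrix and $\UU$ the diagonal capacity matrix. Any flow meeting demand $\bb$ can be written as $\ff_0 + \CC\yy$, where $\ff_0$ is a fixed initial feasible flow (routed, say, on a spanning tree) and $\CC$ spans the cycle space $\ker \BB^\top$. The problem then becomes the unconstrained minimization
\[
\min_{\yy} \norm{\UU^{-1}(\ff_0 + \CC\yy)}_{\infty}.
\]
The approximator $\RR$ enters through residuals: for any intermediate flow $\ff$ with residual demand $\bb - \BB^\top\ff$, the quantity $\norm{\RR(\bb - \BB^\top\ff)}_\infty$ is an $\alpha$-approximation to the extra congestion still needed, and its argmax row exhibits a candidate bottleneck cut.

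First I would smooth the non-differentiable $\ell_\infty$ objective using the log-sum-exp soft-max $\mathrm{smax}_{\mu}(\xx) = \mu \log \sum_i (e^{\xx_i/\mu} + e^{-\xx_i/\mu})$, which is $O(1/\mu)$-smooth and approximates $\norm{\xx}_\infty$ to additive error $\mu \log(2n)$. Choosing $\mu = \Theta(\epsilon/\log n)$ reduces the task to minimizing a smooth convex surrogate within additive error $\Theta(\epsilon)$. I would then run a preconditioned first-order method (accelerated gradient or mirror descent) on this surrogate, maintaining a current flow $\ff_k$ together with a potential iterate $\pphi_k$. Evaluating one gradient requires one multiplication by $\RR$ (to compute the soft-max of the residual and its weighting vector) and one by $\RR^\top$ (to convert the dual soft-max vector into a cycle-space update direction); all remaining bookkeeping, including updating $\ff_k$ and computing $\BB^\top\ff_k$, costs $O(m)$ per iteration, establishing the per-iteration bound.

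The main obstacle will be the convergence analysis and, in particular, quantifying how the approximation quality $\alpha$ enters the iteration count. Because $\RR$ can distort the dual norm induced by $\BB$ and $\UU$ by a factor of $\alpha$, the effective smoothness constant of the surrogate scales like $O(\alpha^2/\mu)$, with $\alpha$ entering quadratically through the Hessian of the composite soft-max objective. Combining this smoothness bound with the $\mu\log n$ smoothing bias, a diameter bound on the iterate space, and a standard gradient/subgradient-type rate then yields the stated $O(\alpha^2 \log^2 n \cdot \epsilon^{-3})$ iteration count; the $\log^2 n$ factor absorbs the smoothing scale together with the range of potentials represented by the rows of $\RR$, and the $\epsilon^{-3}$ arises from the interaction between inverse smoothness, smoothing bias, and the relative-error target. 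Termination delivers both halves of the primal-dual certificate: a flow $\ff_k$ of congestion at most $1+\epsilon$, and a cut $S$ witnessed by the argmax row of $\RR(\bb - \BB^\top\ff_k)$ that certifies the matching lower bound.
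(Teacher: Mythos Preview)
This theorem is not proved in the present paper at all: it is quoted verbatim as Theorem~1.2 of Sherman~\cite{Sherman13:arxiv} and used entirely as a black box. So there is no ``paper's own proof'' to compare against; the paper's contribution begins only once this routine is available.

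That said, your sketch of Sherman's argument has a conceptual inconsistency worth flagging. You parametrize the feasible flows as $\ff_0 + \CC\yy$ with $\CC$ spanning the cycle space $\ker \BB^\top$, which forces every iterate to satisfy $\BB^\top \ff = \bb$ exactly. But then the ``residual demand'' $\bb - \BB^\top \ff$ you feed to $\RR$ is identically zero, and $\RR$ plays no role whatsoever in the optimization. In Sherman's actual algorithm the point of the congestion-approximator is precisely to \emph{avoid} projecting onto the cycle space: one optimizes over \emph{all} flows the composite potential (roughly) $\norm{\UU^{-1}\ff}_\infty + 2\alpha\,\norm{\RR(\bb - \BB^\top\ff)}_\infty$, so that demand violation is penalized in units commensurate with congestion. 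This is what makes the unconstrained problem well-conditioned with condition number $O(\alpha)$ and is where the $\alpha^2$ in the iteration count originates. Your explanation that ``$\alpha$ enters quadratically through the Hessian of the composite soft-max objective'' is on the right track only once the objective is set up this way; under your cycle-space formulation there is no $\alpha$ anywhere in the problem.

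A secondary point: the cut certificate in Sherman's scheme is extracted from the dual variables (the soft-max gradient), not simply from the argmax row of $\RR$ applied to a residual. With the corrected unconstrained formulation, the smoothing-plus-gradient-descent outline you give, together with the smoothness bound $O(\alpha^2 \log n / \epsilon)$ and the additive $O(\epsilon)$ smoothing bias, does recover the stated iteration count.
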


R\"{a}cke\etal~\cite{RackeST14} showed that these congestion approximators
can be efficiently computed using approximate maximum flow routines.
This result can be pharaphrased as:

\begin{theorem}
[main result of~\cite{RackeST14}]
\label{thm:cutApprox}
There is a routine \textsc{CongestionApproximator} that takes a graph $G$,
returns with high probability an $O(\log^{4}n)$-congestion-approximator $\RR$
such that matrix-vector products in $\RR$ and $\RR^T$
can be performed in $O(n)$ time.
Furthermore, this approximator is computed via
a series of approximate flow/cut solutions
with error $1 /\Theta(\log^{3}{n})$ on graphs of sizes $m_1, \ldots m_N$ such that
\[
\sum_{i = 1}^{N} m_i \leq O(m \log^{4}{n}),
\]
plus an additional running time overhead of $O(m\log^6{n})$.
\end{theorem}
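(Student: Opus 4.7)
The plan is to derive the approximator $\RR$ from a Räcke-style hierarchical cut decomposition of $G$. First I would recall that any laminar family $\mathcal{T}$ of vertex subsets together with weights on its members naturally induces a matrix $\RR$: rows are indexed by tree nodes, and row $v$ has support on the vertices of the subset at $v$, with entries encoding the (signed, weighted) indicator of that cut. Applied to a demand vector $\bb$, the entry $(\RR \bb)_v$ is the weighted net demand crossing the corresponding cut, and $\norm{\RR \bb}_\infty$ is the worst such cut ratio. Since a laminar family on $n$ leaves has $O(n)$ internal nodes and total description size $O(n)$, a matrix-vector product with $\RR$ or $\RR^T$ can be carried out by a single upward or downward tree sweep in $O(n)$ time, which is where the stated $O(n)$ matvec bound comes from.

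Second, I would describe the top-down construction that produces the tree. At each subgraph $H$ in the recursion, invoke an OSVV/Sherman-style balanced-cut routine: this searches for a cut of conductance within $\polylog(n)$ of optimal that is also nearly balanced, and it does so using $O(\log n)$ approximate max-flow calls on $H$ with error parameter $\Theta(1/\log^{3}{n})$. Each invocation either returns a balanced sparse cut, along which we recurse, or certifies that no such cut exists, in which case we terminate that branch. Because the cut is balanced up to a polylog factor, the subgraphs at a fixed depth partition $V(G)$ and hence have total edge count $O(m)$; there are $O(\log n)$ depths, and each depth makes $O(\log n)$ flow calls per node, yielding
\[
\sum_{i=1}^{N} m_i \;\leq\; O(m \log^{4}{n}),
\]
which matches the claim.

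Third, to obtain the quality bound of $O(\log^{4}n)$ I would invoke Räcke's theorem that a hierarchical decomposition whose every internal cut has capacity within a $\polylog(n)$ factor of the minimum cut separating its children yields an oblivious routing, and hence a congestion-approximator, with the same distortion. The lower bound $\norm{\RR \bb}_\infty \lesssim \text{opt}(\bb)$ is immediate since each row is a genuine cut in $G$; for the matching upper bound one exhibits a routing along the tree whose congestion is bounded by a constant times $\norm{\RR \bb}_\infty$ at each level, with only the per-cut polylog factor accumulated. The $O(m \log^{6}{n})$ additive overhead in the runtime absorbs the tree-bookkeeping, the residual low-congestion routings between levels, and the conversion of cuts into tree weights.

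The main obstacle I expect is the error accounting. Each approximate max-flow call injects a multiplicative $(1 + 1/\Theta(\log^{3}n))$ slack; these slacks compound both over the $O(\log n)$ invocations within a single balanced-cut subroutine and over the $O(\log n)$ recursive levels, and one must verify that the final distortion remains $O(\log^{4}n)$ rather than deteriorating. A subsidiary difficulty is ensuring that the balanced cuts are genuinely balanced up to polylog factors even when they are found only approximately, so that the recursion tree has depth $O(\log n)$ and the per-level total-size bound actually holds; this requires the standard care in combining the sparse cuts produced with any unbalanced residual, exactly as in~\cite{OrecchiaSVV08,RackeST14}.
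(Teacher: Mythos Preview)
Your proposal reconstructs the R\"acke--Shah--T\"aubig construction, but note that the paper treats this theorem as a cited result and gives only a short justification paragraph pointing into~\cite{RackeST14}. Comparing your reconstruction to that summary, two concrete discrepancies stand out.

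First, your arithmetic for $\sum_i m_i \le O(m\log^4 n)$ does not close. You invoke $O(\log n)$ recursion depths and $O(\log n)$ flow calls per node per depth; that is $O(m\log^2 n)$, not $O(m\log^4 n)$. The paper's accounting is different: $O(\log n)$ levels of top-down partitioning, within each level $O(\log n)$ \emph{rebalancing} adjustments of the current partition, and each such adjustment invokes the KRV cut-matching game~\cite{KhandekarRV06:journal}, which itself makes $O(\log^2 n)$ approximate flow/cut calls. The product $O(\log n)\cdot O(\log n)\cdot O(\log^2 n)$ is where the $\log^4 n$ comes from.

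Second, and relatedly, you replace the cut-matching game by an OSVV/Sherman balanced-cut subroutine and omit the rebalancing step entirely. This matters for two reasons. The rebalancing step is exactly where the error tolerance $1/\Theta(\log^3 n)$ is used (the paper points to Lemma~3.1 of~\cite{RackeST14}); your sketch places the error tolerance inside the sparse-cut routine instead, which is not where the compounding actually happens. And the cut-matching game does more than find a balanced sparse cut: when it fails to cut, it hands back an explicit low-congestion embedding of an expander into the current piece, and that embedding is what certifies the routing upper bound in the congestion-approximator guarantee. Your appeal to ``R\"acke's theorem'' in the third paragraph needs precisely this ingredient, and a bare balanced-sparse-cut oracle does not supply it. The $O(m\log^6 n)$ overhead in the statement likewise comes from applying the $O(\log^2 n)$ matchings of the cut-matching game to random vectors, not from generic tree bookkeeping.
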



This summarizes several aspects of the algorithm for constructing oblivious
routings by R\"{a}cke\etal~\cite{RackeST14}:
the fact that the oblivious routing scheme produced gives a congestion-approximator
was observed in the second paragraph of the abstract.
The approximation guarantee is from Theorem 4.1.
The invocation costs of $\RR$ and $\RR^T$ also follow from
oblivious routing scheme being a tree.

The error tolerance in the maximum flow calls of $1 /\Theta(\log^{3}{n})$
is stated in the abstract and utilized in the rebalancing step of the proof of Lemma 3.1.
Overall the algorithm performs $O(\log{n})$ levels of partition based recursion,
and the total sizes of graphs at each level is $O(m)$.
Each partition step may adjust the partition $O(\log{n})$ times using the
cut-matching game by Khandekar\etal~\cite{KhandekarRV06:journal}, which
in turn needs $O(\log^2{n})$ approximate flow/cut solutions.
Combining these bounds gives a total size of $O(m\log^4{n})$.
The running time overhead comes from applying the $O(\log^2{n})$ matchings
produced in the cut-matching game to a random vector before routing
it using approximate maximum flows.

\subsection{Ultra-Sparsifiers and Size Reductions}

Ultra-sparsifiers are controlled ways of reducing graphs to
tree-like structures.
As they involve pairs of graphs on the same vertex set,
we will use scripts to denote the graph in question in our notations.
The following construction can be obtained
from~\cite{KoutisMP10:journal} and~\cite{AbrahamN12}.

\begin{theorem}
\label{thm:ultraSparsify}
There is a routine \textsc{Ultra-Sparsify}
that takes a graph $G = (V, E_G, \uu_G)$
with $n$ vertices and $m$ edges, and any parameter $\kappa > 1$,
returns in $O(m\log{n}\log\log{n})$ time a graph $H = (V, E_H, \uu_H)$ on the same
set of vertices with $n - 1 + O(m \log^2n\log\log{n} / \kappa)$ edges such that
with high probability we have
\[
\uu_{G}(S) \approx_{\kappa} \uu_{H}(S)
\]
for all subsets of vertices $S \subseteq V$.
\end{theorem}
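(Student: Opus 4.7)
The plan is to instantiate the ultra-sparsifier framework of Koutis--Miller--Peng~\cite{KoutisMP10:journal} using a low-stretch spanning tree from Abraham--Neiman~\cite{AbrahamN12}. First compute an LSST $T$ of $G$ with total stretch $\sum_{e \in E_G} \mathrm{str}_T(e) = O(m \log n \log\log n)$ in $O(m \log n \log\log n)$ time via Abraham--Neiman. The stretches of all off-tree edges can then be extracted in near-linear time by an offline LCA or tree-path-sum computation, which does not dominate the running time.

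Next, perform stretch-based importance sampling. For each off-tree edge $e$, include it independently with probability $p_e = \min\{1, C \cdot \mathrm{str}_T(e) \log n / \kappa\}$ for a sufficiently large absolute constant $C$; if included, rescale its capacity to $\uu_G(e)/p_e$ to preserve expectation. Let $H$ consist of the tree $T$ (with its edge weights boosted as in the KMP construction) together with all sampled off-tree edges. The expected number of sampled off-tree edges is at most $(C \log n / \kappa) \sum_e \mathrm{str}_T(e) = O(m \log^2 n \log\log n / \kappa)$, and a standard Chernoff bound gives concentration with high probability, matching the claimed edge count together with the $n-1$ tree edges.

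For correctness, a matrix Chernoff bound applied to the rescaled rank-one edge Laplacian summands $(\uu_G(e)/p_e) \LL_e$ shows that each summand has spectral norm (in the $\LL_T$-induced inner product) bounded by $O(\kappa/\log n)$, so the sum concentrates around its expectation $\LL_G$, yielding
\[
\LL_G \pleq \LL_H \pleq \kappa \LL_G
\]
up to the standard rescaling, with high probability. Plugging the indicator vector $\mathbf{1}_S$ into these quadratic forms and using $\uu_G(S) = \mathbf{1}_S^{\top} \LL_G \mathbf{1}_S$ (and likewise for $H$) converts this spectral guarantee into the advertised cut approximation $\uu_G(S) \approx_\kappa \uu_H(S)$ simultaneously for all $S \subseteq V$.

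The main obstacle is the matrix concentration step: one must carefully track the interplay between the sampling probabilities, the per-edge stretch, and the spectral norm of each rescaled summand in the $\LL_T$-seminorm so that matrix Chernoff yields a $\kappa$-approximation at the stated edge count without incurring additional polylogarithmic losses. Once the spectral sparsification is established, the cut claim follows immediately by restricting to $\{0,1\}$-vectors, and the overall running time is dominated by the LSST construction at $O(m \log n \log\log n)$.
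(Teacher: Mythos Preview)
Your proposal is correct and is exactly the approach the paper points to: the paper does not supply its own proof of this theorem but simply cites \cite{KoutisMP10:journal} and \cite{AbrahamN12}, and your sketch is precisely the KMP stretch-based ultra-sparsifier instantiated with the Abraham--Neiman low-stretch spanning tree, with the spectral guarantee specialized to cuts via indicator vectors. The paper additionally remarks that, since only cut preservation is needed, the Spielman--Teng ultra-sparsifier with Bencz\'ur--Karger cut sparsifiers in place of spectral sparsifiers would also suffice, but your route is the one the theorem statement's parameters are tuned to.
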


Since minimum cut seeks to minimize $\uu( S ) /\bb( S)$,
$\uu_G(S) \approx_{kappa} \uu_H(S)$ for all $S$ implies
$opt_G(\bb) \approx_{\kappa} opt_H(S)$, and an
$\alpha$-congestion-approximator for $H$ is also
a $\kappa \alpha$-congestion-approximator for $G$.
Note that since we only need to preserve cuts, the Spielman-Teng
construction of ultra-sparsifiers~\cite{SpielmanTengSolver:journal} with spectral
sparsifiers replaced by cut-sparsifiers~\cite{BenczurK96} also gives a similar bound.

These edge reductions are complemented by vertex reductions,
which also are crucial in algorithms using ultra-sparsifiers~\cite{SpielmanTengSolver:journal,Madry10,KoutisMP10:journal,Peng13:thesis,Sherman13,KelnerLOS14}.

\begin{lemma}[Lemma 5.8 of~\cite{Madry10:arxiv}, paraphrased]
\label{lem:reduce}
When given a graph $H$ with $n$ vertices and $m = n - 1 + m'$ edges,
we can produce a graph $H' = \textsc{Reduce}(H)$ with $O(m')$ edges
such that any $\alpha$-congestion-approximator $\RR_{H'}$ for $H'$ can
be converted into an $O(\alpha)$-congestion-approximator $\RR_H$ for $H$.
Furthermore, matrix-vector products involving $\RR_H$ or $\RR_H^T$ can be
computed by performing a single matrix-vector product in $\RR_{H'}$
or $\RR_{H'}^T$ respectively plus an overhead of $O(m)$.
\end{lemma}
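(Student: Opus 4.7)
The plan is to construct $H'$ by a standard greedy tree reduction that exploits the fact that $H$ has only $m'$ more edges than any spanning tree. First I fix a spanning tree $T$ of $H$ and repeatedly apply two local operations: \emph{leaf removal}, deleting any vertex of current degree $1$ and recording its unique incident edge; and \emph{path contraction}, replacing any vertex of current degree $2$ (whose two incident edges are necessarily tree edges) by a single edge between its neighbors whose capacity is the minimum of the two original capacities. Neither rule ever touches an off-tree edge. In the resulting graph $H'$, every surviving vertex either is incident to an off-tree edge or has degree at least $3$ in $H'$; a simple handshake argument using the fact that the tree edges of $H'$ still form a forest then gives $|V(H')|, |E(H')| = O(m')$.

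To convert $\RR_{H'}$ into $\RR_H$, I exploit that every edge the reduction eliminated is a tree edge, so the flow across the cut it defines is determined by the demand: for a maximal contracted chain $u = x_0, x_1, \dots, x_k = w$, any routing of $\bb$ forces a congestion of at least $\bigl|\sum_{i\le j}\bb_{x_i}\bigr|/\uu_{x_j x_{j+1}}$ on the $j$-th chain edge. Let $\ttau(\bb)$ be the vector of these forced per-eliminated-edge congestions, and let $\bb'$ be the demand on $H'$ obtained by pushing each eliminated vertex's demand to its surviving endpoint along $T$. Define $\RR_H \bb$ to stack $\ttau(\bb)$ on top of $\RR_{H'}\bb'$, so that
\[
\norm{\RR_H\bb}_\infty
 = \max\bigl(\norm{\ttau(\bb)}_\infty,\ \norm{\RR_{H'}\bb'}_\infty\bigr).
\]

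The main technical step is then to show $\mathrm{opt}_H(\bb)\approx \max(\norm{\ttau(\bb)}_\infty, \mathrm{opt}_{H'}(\bb'))$ with only a constant loss, after which the $\alpha$-approximation guarantee of $\RR_{H'}$ immediately yields that $\RR_H$ is an $O(\alpha)$-congestion-approximator for $H$. The lower bound is essentially immediate from the definition of $\ttau$ and the fact that any $H$-flow routing $\bb$ restricts to an $H'$-flow routing $\bb'$. The upper bound is the main obstacle: given a routing of $\bb'$ in $H'$ of congestion $c$, I need to lift it back to a routing of $\bb$ in $H$ without blowing up congestion by more than a constant factor. Here the $\min$ choice of capacity in path contraction is crucial, since it lets a unit of flow on a contracted edge be implemented along the original chain at unit congestion, with the residual demand on an eliminated vertex contributing only to the already-accounted $\ttau$-term.

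For the matrix-vector cost, computing $\bb'$ and $\ttau(\bb)$ from $\bb$ is a single bottom-up traversal of $T$ at $O(m)$ cost, after which one multiplication by $\RR_{H'}$ completes the evaluation of $\RR_H\bb$. Applying $\RR_H^T$ is symmetric: take the coordinates corresponding to eliminated edges and distribute them back down $T$, then add the contribution of $\RR_{H'}^T$ acting on the remaining coordinates, again at $O(m)$ overhead plus one inner multiply. This matches the bound claimed in the lemma.
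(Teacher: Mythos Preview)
The paper does not prove this lemma; it is quoted from~\cite{Madry10:arxiv} and used as a black box, so there is no in-paper argument to compare against. Nonetheless, your proposal has a real gap, and it lies in the direction you called ``essentially immediate,'' not in the one you flagged as the main obstacle.

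The claim that $\ttau$ lower-bounds $\mathrm{opt}_H(\bb)$ fails on path-contracted edges. A chain edge $x_jx_{j+1}$ is in general not a bridge of $H$: off-tree edges connect $x_0$ to $x_k$ through the rest of the graph, so flow conservation at the degree-$2$ interior vertices only pins down the \emph{differences} $f_{x_jx_{j+1}}-f_{x_{j-1}x_j}$, leaving one free additive parameter along the whole chain. Your formula $\bigl|\sum_{i\le j}\bb_{x_i}\bigr|/\uu_{x_jx_{j+1}}$ corresponds to one particular choice of that parameter, not to a forced value. For the same reason, an $H$-flow routing $\bb$ does not restrict to an $H'$-flow routing your $\bb'$ with comparable congestion: different chain edges carry different amounts of flow, and whichever one you place on the contracted min-capacity edge, either the routed demand or the congestion is wrong.

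Concretely, take $V=\{x_0,x_1,x_2\}$ with tree edges $x_0x_1$ of capacity $1$ and $x_1x_2$ of capacity $K$, plus one off-tree edge $x_0x_2$ of capacity $1$. Contracting $x_1$ gives $H'$ with two parallel $x_0x_2$ edges, each of capacity $1$. For $\bb=(0,K,-K)$ we have $\mathrm{opt}_H(\bb)\le 1$ by sending everything on $x_1x_2$. But pushing $\bb_{x_1}$ to $x_0$ yields $\bb'=(K,-K)$ with $\mathrm{opt}_{H'}(\bb')=K/2$, while $\norm{\ttau(\bb)}_\infty=\max(0,K/K)=1$. Thus $\max\bigl(\norm{\ttau(\bb)}_\infty,\mathrm{opt}_{H'}(\bb')\bigr)=K/2$, a factor $\Theta(K)$ above $\mathrm{opt}_H(\bb)$; the constant-factor comparison you need is false, and no $O(\alpha)$ bound on $\RR_H$ follows. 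The actual reduction routes interior chain demands \emph{within the chain first} and only hands residual endpoint demands to $H'$ (so that here the $K$ units travel on the capacity-$K$ edge and $\bb'=0$); a fixed ``push each demand to one surviving endpoint'' rule cannot do this.
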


Combining these two steps for edge and vertex reductions
gives our key size reduction routines:

\begin{corollary}
\label{cor:ultraReduce}
There are routines \textsc{UltraSparsifyAndReduce} and \textsc{Convert} so that
when given a graph $G$ with $n$ vertices and $m$ edges, and an approximation
factor $\kappa$,
$\textsc{UltraSparsifyAndReduce}(G, \kappa)$ produces $G'$ such that with high probability.
\begin{enumerate}
\item $G'$ has at most $O(m \log^2{n}  \log\log{n}/ \kappa)$ edges, and
\item given an $\alpha$-congestion-approximator $\RR_{G'}$ for $G'$,
$\RR_G = \textsc{Convert}(G, G', \RR')$
\begin{enumerate}
\item is an $O(\kappa \alpha)$-congestion-approximator for $G$, and
\item matrix-vector products involving $\RR_G$ or $\RR_G^T$
can each be performed using one matrix-vector product involving
$\RR_{G'}$ or $\RR_{G'}^{T}$ plus an overhead of $O(m)$. 
\end{enumerate}
\end{enumerate}
\end{corollary}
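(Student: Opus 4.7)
The plan is to compose the two reductions in sequence: first apply the edge reduction from Theorem~\ref{thm:ultraSparsify} to bring the edge count down by a factor of roughly $\kappa / (\log^{2}{n} \log\log n)$, and then apply the vertex reduction from Lemma~\ref{lem:reduce} to contract away the tree part. The routine \textsc{Convert} will simply compose the inverse conversion from Lemma~\ref{lem:reduce} with the observation that $\uu_G \approx_{\kappa} \uu_H$ makes a congestion-approximator for $H$ serve as a $\kappa$-worse approximator for $G$.

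Concretely, I would define $\textsc{UltraSparsifyAndReduce}(G,\kappa)$ to first invoke $\textsc{Ultra-Sparsify}(G,\kappa)$ to obtain an intermediate graph $H$ on the same vertex set $V$ with $n - 1 + m'$ edges, where $m' = O(m \log^{2}{n} \log\log n / \kappa)$, such that $\uu_G(S) \approx_{\kappa} \uu_H(S)$ for every $S \subseteq V$. I would then set $G' = \textsc{Reduce}(H)$, which by Lemma~\ref{lem:reduce} has $O(m')$ edges and hence meets item 1. (If the ultra-sparsifier output would ever exceed $m$ edges, e.g.\ for very small $\kappa$, the algorithm simply returns $G$ itself, so that $|E(H)| = O(m)$ holds without loss of generality.)

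For item 2, given an $\alpha$-congestion-approximator $\RR_{G'}$ for $G'$, Lemma~\ref{lem:reduce} yields an $O(\alpha)$-congestion-approximator $\RR_H$ for $H$ via a single matrix-vector product in $\RR_{G'}$ plus $O(|E(H)|) = O(m)$ overhead. I define $\textsc{Convert}(G,G',\RR_{G'}) := \RR_H$. To upgrade this to a congestion-approximator for $G$, I would note that since demands $\bb$ are carried by vertices while the capacities $\uu_G$ and $\uu_H$ satisfy $\uu_G(S) \approx_{\kappa} \uu_H(S)$ for every $S$, the max-flow/min-cut duality gives $\text{opt}_G(\bb) \approx_{\kappa} \text{opt}_H(\bb)$. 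Combining this with $\norm{\RR_H \bb}_{\infty} \approx_{O(\alpha)} \text{opt}_H(\bb)$ yields $\norm{\RR_H \bb}_{\infty} \approx_{O(\kappa\alpha)} \text{opt}_G(\bb)$, which is exactly the required $O(\kappa\alpha)$-congestion-approximator guarantee for $G$.

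The matrix-vector product bound then follows by transitivity: computing $\RR_G \bb = \RR_H \bb$ uses one product in $\RR_{G'}$ plus the $O(m)$ overhead of Lemma~\ref{lem:reduce}, and the same holds for $\RR_G^T$. There is no real obstacle here: the proof is a straightforward composition, and the only subtle point is bookkeeping the two approximation factors so that they multiply to $O(\kappa\alpha)$ rather than degrading additively, which follows immediately from the definition of $\approx_{\kappa}$.
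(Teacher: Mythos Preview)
Your proposal is correct and matches the paper's approach exactly: the paper does not spell out a proof but simply presents the corollary as the composition of Theorem~\ref{thm:ultraSparsify} and Lemma~\ref{lem:reduce}, together with the remark (stated just after Theorem~\ref{thm:ultraSparsify}) that $\uu_G(S) \approx_{\kappa} \uu_H(S)$ implies an $\alpha$-congestion-approximator for $H$ is a $\kappa\alpha$-congestion-approximator for $G$. Your write-up is a faithful and complete unpacking of precisely this composition.
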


\section{Recursive Algorithm}
\label{sec:algo}

Our algorithm recursively calls the two routines
for utilizing and constructing congestion-approximators,
while reducing sizes using \textsc{UltraSparsifyAndReduce}.
Its pseudocode is given in Figure~\ref{fig:algo}

\begin{figure}[ht]

\begin{algbox}
$\ff = \textsc{RecursiveApproxMaxFlow}(G, \epsilon, \bb)$
\begin{enumerate}
\item Set $\kappa \leftarrow C \log^{6} n \log\log{n}$ for some absolute constant $C$.
\item $G' \leftarrow \textsc{UltraSparsifyAndReduce}(G, \kappa)$ \label{ln:reduce}
\item $\RR_{G'} \leftarrow \textsc{CongestionApproximator}(G')$, which in turm makes
recursive calls to \textsc{RecursiveApproxMaxFlow}. \label{ln:convenstion-approximator}
\item $\RR_{G} \leftarrow \textsc{Convert}(G, G', \RR_{G'})$ \label{ln:convert}
\item Return $\textsc{ApproximatorMaxFlow}(G, \RR_G, \epsilon)$ \label{ln:finish}.
\end{enumerate}
\end{algbox}

\caption{Recursive Algorithmm for Approximate Maximum Flow}
\label{fig:algo}
\end{figure}

We will simplify the analysis by bounding the size reductions at each
recursive call and the overall failure probability of any call.

\begin{lemma}
\label{lem:sanity}
We have $|E_{G'}| \leq O( |E_{G}| / (C \log^4{n} ))$ during each recursive call,
and with high probability all the function invocations terminate correctly,
\end{lemma}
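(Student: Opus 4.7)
\medskip

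\noindent\textbf{Proof proposal.} The plan is to handle the two assertions separately: the size bound follows directly from the parameter choice, while the correctness assertion reduces to a union bound over a geometrically-decreasing recursion tree.

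For the size bound, I would simply plug $\kappa = C \log^6 n \log\log n$ into Corollary~\ref{cor:ultraReduce}: the routine \textsc{UltraSparsifyAndReduce} on input $G$ with $m = |E_G|$ edges returns $G'$ with at most
\[
O\!\left( \frac{m \log^2 n \log\log n}{\kappa} \right)
 \;=\; O\!\left( \frac{m}{C \log^4 n} \right)
\]
edges, which is exactly the claimed bound. Note that $n$ here refers to the vertex count of $G$ at the current recursive call, which is bounded by the vertex count of the original input, so all $\log$'s can be taken with respect to the top-level $n$ without loss.

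For the correctness assertion, the key observation is that the total size of graphs appearing at the next level of recursion shrinks by a factor we control through $C$. At a single call on $G$ with $m$ edges, line~\ref{ln:convenstion-approximator} invokes \textsc{CongestionApproximator} on $G'$, which by Theorem~\ref{thm:cutApprox} generates recursive \textsc{RecursiveApproxMaxFlow} calls on graphs of total size $O(|E_{G'}| \log^4 n)$. Combined with part~1 this is $O(m/C)$, and choosing $C$ sufficiently large makes the total recursive size at most $m/2$. Iterating this bound across levels, the total size summed over the entire recursion tree is $O(m)$, and in particular the total number of recursive invocations is at most polynomial in $n$.

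Each invocation involves only two randomized subroutines, \textsc{UltraSparsifyAndReduce} (Theorem~\ref{thm:ultraSparsify}) and \textsc{CongestionApproximator} (Theorem~\ref{thm:cutApprox}), each of which succeeds with high probability, i.e.\ with failure probability at most $n^{-c}$ for an arbitrarily large constant $c$ chosen in advance. Taking $c$ large enough relative to the polynomial bound on the number of invocations and union-bounding yields an overall failure probability of $n^{-\Omega(1)}$, so that with high probability every call in the recursion succeeds. The only subtle point I anticipate is that the $\log^4 n$ factor coming out of Theorem~\ref{thm:cutApprox} must be the \emph{same} polylog that is cancelled by $\kappa$; if instead this factor grew with the recursion depth (for instance if $n$ were the current graph's vertex count and itself changed by large factors per level) the geometric argument could fail. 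Since vertex counts are monotonically non-increasing along the recursion, this is not an issue, and the argument goes through.
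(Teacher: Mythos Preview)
Your proposal is correct and follows essentially the same approach as the paper: the size bound by direct substitution of $\kappa$ into Corollary~\ref{cor:ultraReduce}, and the correctness claim by observing that the total recursive size shrinks by a factor of $C$ per level (via Theorem~\ref{thm:cutApprox}), so the recursion tree has total size $O(m)$ and a union bound over the high-probability guarantees of the randomized subroutines finishes it. Your added remarks about taking logs with respect to the top-level $n$ and the monotonicity of vertex counts are exactly the caveat the paper notes immediately after its proof.
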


\begin{proof}
Corollary~\ref{cor:ultraReduce} gives that the size of $G'$
is at most $O(m \log^2{n} \log\log{n} / \kappa)$, and the bound
follows from the choice of $\kappa = C \log^6{n} \log\log{n}$.

We can then follow the call structure of this algorithm and
accumulate failure probabilities.
At each step, Theorem~\ref{thm:cutApprox} gives that the
total size of the graphs recursed on is bounded by
$O(|E_{G'}| \log^4{n})  = O(|E_{G}| / C)$.
Therefore, a sufficiently large $C$ means the total number of recursive
calls is bounded by $O(m)$ with high probability.
Accumulating the failure probabilities over these steps using
the union bound then gives the overall success probability.
\end{proof}

We remark that to bound the failure probability of each recursive
call by $1 - n^{c}$, the routines also need to use $\bar{n}$, the initial
vertex count, instead of the size of the current instance.
This is a situation that occur frequently in analyses of recursive
invocations of Monte-Carlo randomized algorithms~\cite{SpielmanTengSolver:journal,Peng13:thesis,CohenKMPPRX14}.
We omit the details here due to the large number of routines used
in black-box manners.

For the rest of this proof we will assume that all black-box
invocations terminate correctly.

\begin{lemma}
\label{lem:lastStep}
On input of a graph with size $m$, with high probability the
cost of the final call to \textsc{ApproximatorMaxFlow} is at most
$O(m \log^{32}n \log^2 \log{n} \epsilon^{-3})$.
\end{lemma}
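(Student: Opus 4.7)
The plan is to apply Theorem~\ref{thm:maxFlow} to the last line of \textsc{RecursiveApproxMaxFlow}, namely the call \textsc{ApproximatorMaxFlow}$(G, \RR_G, \epsilon)$. By Theorem~\ref{thm:maxFlow}, this cost decomposes into two pieces: the number of iterations $O(\alpha^2 \log^2 n\, \epsilon^{-3})$, where $\alpha$ is the quality of $\RR_G$, and the per-iteration cost, namely $O(m)$ plus a constant number of matrix-vector products in $\RR_G$ or $\RR_G^T$. Bounding these two quantities is the whole content of the lemma.

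First I would bound the quality $\alpha$ by chaining the two guarantees used to build $\RR_G$. Conditioning on the high-probability success event already furnished by Lemma~\ref{lem:sanity}, Theorem~\ref{thm:cutApprox} applied in Line~\ref{ln:convenstion-approximator} says $\RR_{G'}$ is an $O(\log^4 n)$-congestion-approximator for $G'$. Part~2(a) of Corollary~\ref{cor:ultraReduce}, used in Line~\ref{ln:convert}, then inflates the quality by $O(\kappa)$, so $\RR_G$ is an $O(\kappa \log^4 n)$-congestion-approximator for $G$. Plugging in $\kappa = C \log^6 n \log\log n$ gives $\alpha = O(\log^{10} n \log\log n)$, and hence $\alpha^2 \log^2 n = O(\log^{22} n \log^2 \log n)$.

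Next I would bound the per-iteration cost. Theorem~\ref{thm:cutApprox} guarantees that matrix-vector products in $\RR_{G'}$ and $\RR_{G'}^T$ take $O(n)$ time, and part~2(b) of Corollary~\ref{cor:ultraReduce} says each product in $\RR_G$ or $\RR_G^T$ is one product in $\RR_{G'}$ or $\RR_{G'}^T$ plus an $O(m)$ overhead. Together this gives $O(m)$ per matrix-vector product, and hence $O(m)$ per iteration of \textsc{ApproximatorMaxFlow}. Multiplying by the iteration count yields an overall bound of $O(m \log^{22} n \log^2 \log n\, \epsilon^{-3})$, which is absorbed into the stated $O(m \log^{32} n \log^2 \log n\, \epsilon^{-3})$; the slack in the exponent of $\log n$ accommodates the constants hidden in the $O(\cdot)$ notation of the cited black boxes.

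There is no real obstacle here beyond careful bookkeeping of polylog factors through the three black boxes. The one point worth flagging is scope: this lemma measures only the final \textsc{ApproximatorMaxFlow} call on $G$ itself, so the cost of the recursive \textsc{CongestionApproximator}$(G')$ call in Line~\ref{ln:convenstion-approximator} is \emph{not} charged here; it will be accounted for separately by the size-halving argument made possible by the choice of $\kappa$, exactly as previewed in the proof of Lemma~\ref{lem:sanity}.
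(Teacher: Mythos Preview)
Your proof is correct and follows essentially the same approach as the paper: chain the quality bound from Theorem~\ref{thm:cutApprox} through Corollary~\ref{cor:ultraReduce} to get $\alpha = O(\log^{10}n\log\log n)$, bound the per-iteration matrix-vector cost by $O(m)$ via the same two results, and plug into Theorem~\ref{thm:maxFlow}. If anything you are more explicit than the paper, which simply asserts that the bound follows from Theorem~\ref{thm:maxFlow} after establishing the approximator quality and product cost; your observation that the computation actually yields $\log^{22}n$ rather than the stated $\log^{32}n$ is correct and consistent with the paper's stated policy of not optimizing exponents.
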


\begin{proof}
The guarantees of \textsc{CongestionApproximator} from
Theorem~\ref{thm:cutApprox} gives that
$\RR_{G'}$ is an $O(\log^{4}n)$-congestion-approximator for $G'$, and
matrix-vector products involving $\RR_{G'}$ and $\RR_{G'}$ cost $O(n)$.
Combining this with Corollary~\ref{cor:ultraReduce} gives that $\RR_G$ as returned by
$\textsc{Convert}$ on Line~\ref{ln:convert} of  \textsc{RecursiveApproxMaxFlow}
is an $O(\log^{10}n\log\log{n})$-congestion-approximator for $G$, and the
cost of matrix-vector products $\RR_G$ and $\RR_G^T$ is $O(m)$.
The running time and error guarantees then follow from Theorem~\ref{thm:maxFlow}.
\end{proof}

\begin{theorem}
With high probability, \textsc{RecursiveApproxMaxFlow}
returns an $(1 + \epsilon)$-approximate flow/cut solution in time
$O(m \log^{32}  n \log^2\log{n} \max \{\log^9{n}, \epsilon^{-3} \})$.
\end{theorem}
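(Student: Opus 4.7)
The plan is to combine the correctness guarantee from Lemma~\ref{lem:sanity} with a one-step recurrence for the total running time that exploits the geometric size reduction built into the algorithm. Correctness is immediate: Lemma~\ref{lem:sanity} shows that, with high probability, every recursive invocation terminates with its declared guarantees, so the final \textsc{ApproximatorMaxFlow} call on line~\ref{ln:finish} returns a $(1+\epsilon)$-approximate flow/cut pair by Theorem~\ref{thm:maxFlow}.

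For the running time, I would let $T(m,\epsilon)$ denote the cost of \textsc{RecursiveApproxMaxFlow} on an $m$-edge graph with error $\epsilon$. The only direct cost that depends on $\epsilon$ is the final \textsc{ApproximatorMaxFlow} call, bounded by Lemma~\ref{lem:lastStep} as $O(m\log^{32}n\log^2\log n\,\epsilon^{-3})$; the other direct costs---\textsc{UltraSparsifyAndReduce} from Theorem~\ref{thm:ultraSparsify}, \textsc{Convert}, and the explicit $O(|E_{G'}|\log^6 n)$ overhead inside \textsc{CongestionApproximator}---are dominated by this term. All recursive work sits inside \textsc{CongestionApproximator}($G'$): by Theorem~\ref{thm:cutApprox} it spawns flow calls with error $1/\Theta(\log^3 n)$ on graphs of total size $O(|E_{G'}|\log^4 n)$, which by Lemma~\ref{lem:sanity} is at most $O(m/C)$. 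This gives the recurrence
\begin{equation*}
T(m,\epsilon)\leq O\!\left(m\log^{32}n\log^2\log n\,\epsilon^{-3}\right) + \sum_{i} T\!\left(m_i,\, c/\log^3 n\right),
\end{equation*}
with $\sum_i m_i\leq m/C$ and $c$ an absolute constant.

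I would then peel off the top level of this recurrence. The top contributes the $\epsilon$-dependent term directly. Every deeper invocation is made with the fixed error $c/\log^3 n$, and hence pays $O(\log^{32}n\log^2\log n\cdot\log^9 n)$ per edge. Choosing $C$ large enough, the total size across all deeper recursion levels forms a geometric series summing to $O(m)$, giving an interior contribution of $O(m\log^{41}n\log^2\log n)$. Taking the larger of the two terms yields the claimed bound $O(m\log^{32}n\log^2\log n\,\max\{\log^9 n,\epsilon^{-3}\})$.

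The hard part will be the bookkeeping: verifying that the size shrinkage $\sum_i m_i\leq m/C$ and the fixed interior error $c/\log^3 n$ persist uniformly at every recursion depth, so that the geometric series argument closes cleanly rather than picking up extra polylog factors per level. Both points rely on the concrete choice $\kappa = C\log^6 n\log\log n$ in the algorithm and on the fact that \textsc{CongestionApproximator} itself fixes the error it demands from its subroutine. The high-probability union bound over failures across the recursion tree has already been absorbed into Lemma~\ref{lem:sanity}, so no additional probabilistic argument is needed.
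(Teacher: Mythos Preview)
Your proposal is correct and follows essentially the same route as the paper: peel off the top-level $\epsilon$-dependent cost from Lemma~\ref{lem:lastStep}, observe that all deeper recursive calls are made with the fixed error $1/\Theta(\log^3 n)$ from Theorem~\ref{thm:cutApprox}, and use the geometric size reduction $\sum_i m_i \leq m/2$ (for $C$ large enough) to bound the total interior work by $O(m\log^{41}n\log^2\log n)$. The paper phrases the interior bound as an explicit induction with a base case $m \leq \log^{10}\bar{n}$ handled by an existing max-flow routine, which you should mention when you write this out in full, but the argument is otherwise identical.
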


\begin{proof}
Theorem~\ref{thm:cutApprox} guarantees that the error parameter
in all intermediate calls is at most $\overline{\epsilon} = 1 / \Theta( \log^3{n})$.
Let the running time of $\textsc{RecursiveApproxMaxFlow}$
on a graph with $m$ edges and error $\overline{\epsilon}$ be $\mathcal{T}(m)$.
We will show by induction, or guess-and-check, that we can choose $\overline{C}$
to ensure $\mathcal{T}(m) \leq \overline{C} m \log^{41}n \log^2\log{n}$.

The base case of $m \leq \log^{10} \bar{n}$, where $\bar{n}$ is the top level
vertex count, follows from invoking existing approximate maximum flow
algorithms~\cite{ChristianoKMST10}.
For the inductive case, let the graphs that we recurse on have sizes
$m_1 \ldots m_N$.
Theorem~\ref{thm:cutApprox} and Lemma~\ref{lem:lastStep}
give the following recurrence:
\begin{eqnarray*}
\mathcal{T}(m)
& \leq &
\sum_{i = 1}^{N} \mathcal{T}\left(m_i \right) + O(m\log^6 n)
\\ & & \qquad + O(m \log^{32}n \log^2 \log{n} \overline{\epsilon}^{-3})\\
& \leq & \sum_{i = 1}^{N} \mathcal{T}\left(m_i \right) + O(m \log^{41}n \log^2 \log{n}).
\end{eqnarray*}

Since all graphs that we compute approximate maximum flows on
are subgraphs of $G'$,
Lemma~\ref{lem:sanity} allows us to invoke the inductive hypothesis, giving
\[
\mathcal{T}(m) \leq \sum_{i = 1}^{N} \overline{C} m_i \log^{41}n \log^2 \log{n}
+ O(m \log^{41}n \log^2 \log{n}).
\]

The total sizes of the graphs that we recurse on can in turn be bounded
via Theorem~\ref{thm:cutApprox} and Corollary~\ref{cor:ultraReduce}.
\[
\sum_{i = 1}^{N} m_i \leq O(\log^4{n}|E_{G'}|)
\leq O(m \log^6 {n} \log^2\log{n} / \kappa).
\]
Choosing $C$ appropriately then allows us to bound this by $m / 2$,
giving a total of
\[
\mathcal{T}(m) \leq \frac{\overline{C}}{2} m \log^{41}n \log^2 \log{n}
+ O(m \log^{41}n \log^2 \log{n}).
\]
The inductive hypothesis then follows by picking $\overline{C}$ to be twice
the constant of the trailing term.

This gives the bound of $O(m \log^{41}n \log^{2}\log{n})$ when $\epsilon$
is set to $1 / \Theta ( \log^3{n} )$.
For the general case of arbitrary $\epsilon$, Lemma~\ref{lem:lastStep} gives
a bound of $O(m \log^{32}n \log^2 \log{n} \epsilon^{-3})$.
Note that the first term is still present, since the second to last call is made
on a graph of size $m / 2$ with $\epsilon = 1 / \Theta ( \log^3{n} )$.
Summing over both terms then gives the overall runtime bound.
\end{proof}

We remark that the $\epsilon^{-3}$ term arises from a similar dependency in
the congestion-approximator based flow routine by Sherman~\cite{Sherman13},
which is stated in Theorem~\ref{thm:maxFlow}.

Invoking this algorithm in Theorem~\ref{thm:cutApprox}
also gives an $O(m \polylog(n))$ time algorithm for constructing
hierarchical tree decomposition based oblivious routing schemes.
\begin{corollary}
Given an undirected graph $G$, we can construct
in $O(m \log^{45}  n \log^2\log{n})$ time
a tree that with high probability corresponds to an
$O(\log^{4}n)$-competitive oblivious routing scheme .
\end{corollary}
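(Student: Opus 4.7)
The plan is to invoke Theorem~\ref{thm:cutApprox} directly, but using the newly constructed \textsc{RecursiveApproxMaxFlow} as the underlying approximate max-flow subroutine. The output is the matrix $\RR$ produced by \textsc{CongestionApproximator}, which as noted in the discussion following Theorem~\ref{thm:cutApprox} is realized by a hierarchical tree decomposition in the sense of R\"{a}cke~\textit{et al.}~\cite{RackeST14}. The approximation guarantee $O(\log^{4} n)$ transfers immediately from Theorem~\ref{thm:cutApprox}, so the only thing to establish is the running-time bound.

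The runtime splits into two contributions dictated by Theorem~\ref{thm:cutApprox}: a sequence of approximate flow/cut computations on graphs of sizes $m_{1}, \ldots, m_{N}$ with $\sum_{i} m_{i} \leq O(m \log^{4} n)$, each at error $\overline{\epsilon} = 1 / \Theta(\log^{3} n)$, and an additive overhead of $O(m \log^{6} n)$. I would apply the theorem just proved to bound the cost of each flow call. With $\overline{\epsilon} = 1/\Theta(\log^{3} n)$ we have $\overline{\epsilon}^{-3} = \Theta(\log^{9} n)$, which matches the $\log^{9} n$ lower term in the $\max\{\cdot, \cdot\}$, so each call on a graph of size $m_{i}$ runs in time $O(m_{i}\log^{41} n \log^{2}\log{n})$.

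Summing over the flow calls yields
\[
\sum_{i=1}^{N} O\!\left(m_{i}\log^{41} n \log^{2}\log{n}\right)
\;=\; O\!\left(\log^{41} n \log^{2}\log{n}\right) \cdot O(m \log^{4} n)
\;=\; O\!\left(m \log^{45} n \log^{2}\log{n}\right),
\]
which dominates the $O(m\log^{6} n)$ overhead and gives the claimed bound.

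I do not expect any serious obstacle here; the one thing that deserves care is the potential circularity between \textsc{CongestionApproximator} and \textsc{RecursiveApproxMaxFlow}. This is already handled inside the analysis of the main theorem, where the size-reduction step~\ref{ln:reduce} and Lemma~\ref{lem:sanity} guarantee that each recursive invocation operates on strictly smaller graphs, so the recursion terminates. The failure probability is managed exactly as in Lemma~\ref{lem:sanity}, by parametrizing all high-probability guarantees with respect to the top-level vertex count $\bar{n}$ and applying a union bound over the $O(m)$ total calls.
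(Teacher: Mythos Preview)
Your proof is correct and matches the paper's own approach, which is simply the one-line remark ``Invoking this algorithm in Theorem~\ref{thm:cutApprox} also gives an $O(m\polylog(n))$ time algorithm\ldots'' preceding the corollary. You have spelled out the arithmetic the paper leaves implicit: plugging $\overline{\epsilon}=1/\Theta(\log^3 n)$ into the main theorem gives $O(m_i\log^{41}n\log^2\log n)$ per flow call, and summing over $\sum_i m_i = O(m\log^4 n)$ yields the stated $O(m\log^{45}n\log^2\log n)$.
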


\newcommand{\etalchar}[1]{$^{#1}$}


\end{document}